\newtheorem{definition}{Definition}
\newtheorem{theorem}{Theorem}
\newtheorem{corollary}{Corollary}
\newtheorem{example}{Example}
\newtheorem{lemma}{Lemma}
\author{Wang Changlong, Shigang Yue, Jigen Peng}
\title{When is P such that $l_0$-minimization Equals to $l_p$-minimization}
\begin{document}\large
\title{When is P such that $l_0$-minimization Equals to $l_p$-minimization}
\date{}
\maketitle
\begin{abstract}
In this paper, we present an analysis expression of $p^{\ast}(A,b)$ such that the unique solution to $l_0$-minimization also can be the unique solution to $l_p$-minimization for any $0<p<p^{\ast}(A,b)$. Furthermore, the main contribution of this paper isn't only the analysis expressed of such $p^{\ast}(A,b)$ but also its proof. Finally, we display the results of two examples to confirm the validity of our conclusions
\end{abstract}
\textbf{keywords:} sparse recovery, null space constant, null space property, $l_0$-minimization, $l_p$-minimization,
\section{INTRODUCTION}
One of the core problems in Compressed Sensing is to find the sparest solution to the underdetermined system $Ax=b$, where $A \in R^{n \times m}$ is an underdetermined matrix (i.e. $n<m$), and $b \in R^m$ a vector representing some measured or represented signals. The problem is popularly modeled into the following $l_0$-minimization:
 \begin{eqnarray}
 \mathop{\min}\limits_{x \in R^m} \|x\|_0\ s.t. \ Ax=b
\end{eqnarray}
where $\|x\|_0$ indicates the number of nonzero elements of $x$, which is commonly called $l_0$-norm although it is not a real vector norm. Since $A$ has more columns than rows, the underdetermined linear system $Ax=b$ admits an infinite number of solutions. In order to find the sparest one, much excellent theoretical work has been devoted themselves to the $l_0$-minimization (1).

However, in paper [1] the author proves that $l_0$-minimization (1) is NP-hard and is combinational and computationally intractable because of the discrete and discontinuous nature. Therefore, alternative strategies to find sparest solution have been put forward (see, for example [3][5][11][12][17][22][23]), among which is the model $l_1$-minimization:
\begin{eqnarray}
\mathop{\min}\limits_{x \in R^m} \|x\|_1\ s.t. \ Ax=b
\end{eqnarray}
where $\|x\|_1=\sum_{i=1}^m |x_i|$, the $l_1$-norm of $x$. The model (2) is a convex problem and hence can be recast as linear problem solvable efficiently. However, in order to get the sparest solution to $Ax=b$ by $l_1$-minimization (1) we need certain conditions on $A$ and/or $b$, for example, the novel Restricted Isometry Property (RIP) of $A$. A matrix $A$ is said to have restricted isometry property of order s if there exists a constant $0<\delta_s<1$ such that
\begin{eqnarray}
1-\delta_s \leq \displaystyle\frac{\|Ax\|_2}{\|x\|_2} \leq 1+\delta_s
\end{eqnarray}
for any s-sparse vector $x$. A vector $x$ is said s-sparse if $\|x\|_0\leq s $.

In papers [3] and [5] Cand\.{e}s and Tao show that any s-sparse vector can be recovered via $l_1$-minimization (2) as long as $\delta_{3s}+\delta_{4s}<2$ or $\delta_{2s}<{\sqrt{2}-1}$. In paper [11], the author improves the latter inequality and establishes exact recovery of s-sparse vector via $l_1$-minimization under the condition $\delta_{2s}<2(3-\sqrt{2})/7$.

However, it should be pointed out that the problem of calculating $\delta_{2s}$ of a given matrix $A$ is still NP-hard, Work done by Gribuval and Nielsen [10] adopts a new strategy that lies between $l_0$-minimization (1) and $l_1$-minimization (2). Instead of $l_1$-minimization (3), they consider the $l_p$-minimization with $0<p<1$
\begin{eqnarray}
\mathop{\min}\limits_{x \in R^m} \|x\|_p^p \ s.t. \ Ax=b
\end{eqnarray}
where $\|x\|_p^p=\sum_{i=1}^m |x_i|^p$. In the literature, $\|x\|_p$ is still called $l_p$-norm of $x$ though it is only a quass-norm when $0<p<1$ (because in this case it violates the triangular inequality).
From the definition of $l_p$-norm, it seems to be more natural to consider $l_p$-minimization (4) instead of $l_0$-minimization (1) than others, due to the fact that $\|x\|_0=\mathop{lim}\limits_{p \to 0} \|x\|_p^p$. In paper [4], Chartrand claims that an s-sparse vector can be recovered by $l_p$-minimization (4) for some $p > 0$ small enough provided $\delta_{2s+1} < 1$.

Recently, Peng, Yue and Li [7] proves that there exists a constant $p(A,b)>0$, such that every a solution to $l_p$-minimization is also the solution to $l_0$-minimization whenever $p<p(A,b)$. This result builds a bridge between $l_p$-minimization (4) and $l_0$-minimization (1), and what is important is that this conclusion is not limited by the structure of matrix. However, the authors do not give an analysis expression of $p(A,b)$ so that the model of choice of $l_p$-minimization (4) is still difficult. That is, when does $l_0$-minimization (1) equal to $l_p$-minimization (4) is still open. In this paper we devote ourselves to giving a complete answer to this problem.

Our paper is organized as follows. In Section 2, we will present some preliminaries of the null space property, which plays a core role in the proof of our main theorem. In Section 3 we focus on proving the main results of this paper. There we will present an analysis expression of $p(A,b)$ such that the unique solution of $l_0$-minimization (1) is also the unique solution of $l_p$-minimization (4). Finally, we summarize our finding in last section.

For convenience, for $x \in R^m$, we define its support by $support\ (x)=\{i:x_i \neq 0\}$ and the cardinality of set S by $|S|$.
Let $Ker(A)=\{x \in R^m:Ax=0\}$ be the null space of matrix A, denote by $\lambda_{min^{+}}(A)$ the minimum nonzero absolute-value eigenvalue of A and by $\lambda_{max}(A)$ the maximum one. We also use the subscript notation $x_S$ to denote such a vector that is equal to $x$ on the index set S and zero everywhere else. and use the subscript notation $A_S$ to denote a submatrix whose columns are those of the columns of $A$ that are in the set index S.
\section{PRELIMINARISE}
In order to investigate conditions under which both $l_0$-minimization (1) and $l_p$-minimization (4) have the same unique solution, it is convenient for us to work with a sufficient and necessary condition of the solution to $l_0$-minimization (1) and $l_p$-minimization (4). Therefore, in this preliminary section, we focus on introducing such an condition, namely the null space property (NSP).
\begin{theorem}
(NSP)[10].Given a matrix $A \in R^{n \times m}$ with $n \leq m$, $x^{\ast}$ is the unique solution to $l_p$-minimization $(0\leq p \leq 1)$  if and only if:
\begin{eqnarray}
\|x_S\|_p < \|x_{S^C}\|_p
\end{eqnarray}
 for any $x \in Ker(A)$, and set $S$ with $|S| \leq |T^{\ast}|$, where $T^{\ast}=support(x^{\ast})$
\end{theorem}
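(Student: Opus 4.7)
My plan is to split the equivalence into the two implications and treat $0<p\leq 1$ uniformly; the $p=0$ case will be handled separately by replacing $p$-subadditivity with a set-theoretic counting argument.

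For sufficiency (NSP implies uniqueness of $x^{\ast}$), I would pick any feasible $y\neq x^{\ast}$, put $v:=y-x^{\ast}\in\mathrm{Ker}(A)\setminus\{0\}$, and split the objective over $T^{\ast}$ and $(T^{\ast})^c$, noting $y_{(T^{\ast})^c}=v_{(T^{\ast})^c}$ since $x^{\ast}$ vanishes outside $T^{\ast}$. On $T^{\ast}$ I would invoke the elementary reverse quasi-triangle inequality $|a+b|^p\geq |a|^p-|b|^p$, an immediate corollary of the $p$-subadditivity $(|a|+|b|)^p\leq |a|^p+|b|^p$ valid for $0<p\leq 1$. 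Summing coordinate-wise yields
\[
\|y\|_p^p\;\geq\;\|x^{\ast}\|_p^p-\|v_{T^{\ast}}\|_p^p+\|v_{(T^{\ast})^c}\|_p^p.
\]
Then applying the NSP at the admissible choice $S=T^{\ast}$ gives the strict bound $\|v_{T^{\ast}}\|_p^p<\|v_{(T^{\ast})^c}\|_p^p$, so $\|y\|_p^p>\|x^{\ast}\|_p^p$ and $x^{\ast}$ is the unique minimizer.

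For necessity (uniqueness implies NSP), I would argue by contraposition. Assume the NSP fails, so that some $v\in\mathrm{Ker}(A)\setminus\{0\}$ and some $S$ with $|S|\leq |T^{\ast}|$ satisfy $\|v_S\|_p\geq \|v_{S^c}\|_p$. Set $\tilde x_1:=v_S$ and $\tilde x_2:=-v_{S^c}$. Since $v\in\mathrm{Ker}(A)$, we have $A\tilde x_1=-Av_{S^c}=A\tilde x_2$, so both vectors are feasible for $\min\|x\|_p$ subject to $Ax=A\tilde x_1$, with $\tilde x_1\neq \tilde x_2$ (because $v\neq 0$) and $\|\tilde x_1\|_p\geq \|\tilde x_2\|_p$. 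The vector $\tilde x_1$, whose support lies in $S$ and hence has cardinality at most $|T^{\ast}|$, thus fails to be uniquely optimal, contradicting the uniqueness hypothesis read, as in [10], uniformly over all $|T^{\ast}|$-sparse recovery problems.

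The chief delicacy I foresee is isolating where the strict inequality comes from in the sufficiency argument: the $p$-subadditivity is itself only weak, so strictness must be imported from the strict NSP bound. For the $p=0$ case I would replace the pointwise inequality by the set inclusion $\mathrm{supp}(x^{\ast}+v)\cap T^{\ast}\supseteq T^{\ast}\setminus\mathrm{supp}(v)$, which leads to $\|y\|_0\geq |T^{\ast}|-|\mathrm{supp}(v)\cap T^{\ast}|+|\mathrm{supp}(v)\cap(T^{\ast})^c|$; the NSP at $p=0$ with $S=T^{\ast}$ then forces $\|y\|_0>\|x^{\ast}\|_0$, while the necessity construction $(\tilde x_1,\tilde x_2)$ transfers verbatim.
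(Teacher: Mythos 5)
The paper never proves this statement: Theorem~1 is imported verbatim from reference [10] and used as a black box, so there is no in-paper argument to compare yours against. Your reconstruction is the standard proof and is essentially correct. The sufficiency direction --- splitting $\|y\|_p^p$ over $T^{\ast}$ and $(T^{\ast})^c$, applying the reverse $p$-triangle inequality $|a+b|^p \geq |a|^p - |b|^p$ on $T^{\ast}$, and importing the strict inequality from the NSP at the admissible set $S=T^{\ast}$ --- is exactly right, and your separate counting argument for $p=0$ (via $\mathrm{supp}(x^{\ast}+v)\cap T^{\ast}\supseteq T^{\ast}\setminus\mathrm{supp}(v)$) correctly replaces $p$-subadditivity where it is unavailable. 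Your necessity construction $\tilde x_1=v_S$, $\tilde x_2=-v_{S^c}$ is also the standard one, and you have put your finger on the one genuine delicacy: for a single fixed $x^{\ast}$, uniqueness of $x^{\ast}$ for its own problem $Ax=Ax^{\ast}$ does \emph{not} literally imply the NSP over all $S$ with $|S|\leq|T^{\ast}|$ (the constructed $\tilde x_1$ solves a different right-hand side), so the ``only if'' direction is only valid under the uniform reading that every $|T^{\ast}|$-sparse vector is uniquely recovered --- which is how [10] states the result and how the paper actually uses it (cf.\ Corollary~1). Two small points worth tightening: the NSP inequality must be restricted to \emph{nonzero} $x\in Ker(A)$ (the paper's statement omits this, and your sufficiency argument silently uses $v\neq 0$), and in the $p=0$ step you should record $\|x^{\ast}\|_0=|T^{\ast}|$ to close the chain of inequalities.
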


NSP provides a sufficient and necessary condition to judge a vector whether can be recovered by $l_0$-minimization (1) or $l_p$-minimization (4), which is the most important advantage of NSP. However, NSP is difficult to be checked for a given matrix. In order to reach our goal, we recall the concept null space constant (NSC), which is closely related to NSP and will offer tremendous help in illustrating the performance of $l_0$-minimization (1) and $l_p$-minimization (4).
\begin{definition}
(NSC)[9].Let $0 \leq p \leq 1$ and $t>0$, the null space constant $h(p,A,t)$ is the smallest number such that:
$$\sum_{i \in S}|x_i|^p \leq h(p,A,t) \sum_{i \notin S}|x|^p$$
for any index set $S \subset \{1,2,\ldots,m\}$ with $|S| \leq t$ and any $x \in Ker(A)$
\end{definition}

Combining NSC, NSP and papers [2] [21], we can derive the following corollaries:
\begin{corollary}
For any $p \in [0,1]$, $h(p,A,t)<1$ is a sufficient and necessary condition to recovery any t-sparse vector by $l_p$-minimization
\end{corollary}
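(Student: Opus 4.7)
The plan is to reformulate the condition $h(p,A,t)<1$ as a uniform strict inequality that matches the hypothesis of Theorem 1, so the corollary reduces to a direct application of NSP. Concretely, I would first establish the equivalence
\[
h(p,A,t)<1 \iff \sum_{i\in S}|x_i|^p < \sum_{i\notin S}|x_i|^p \text{ for every nonzero } x\in Ker(A) \text{ and every } S \text{ with } |S|\le t.
\]

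The ($\Rightarrow$) direction is essentially bookkeeping from Definition 1: the defining inequality $\sum_{i\in S}|x_i|^p\le h(p,A,t)\sum_{i\notin S}|x_i|^p$ combined with $h(p,A,t)<1$ gives strictness as soon as the denominator is known to be positive, and the degenerate case $\sum_{i\notin S}|x_i|^p=0$ would force $x$ to be $t$-sparse and then, via the same bound, to vanish. The ($\Leftarrow$) direction is handled by compactness. For $p>0$ the ratio $\sum_{i\in S}|x_i|^p/\sum_{i\notin S}|x_i|^p$ is $0$-homogeneous in $x$, so I normalize on $\{x\in Ker(A):\|x\|_p^p=1\}$, a compact subset of $R^m$ because $\|\cdot\|_p^p$ is a continuous coercive function even when $0<p<1$. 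For each fixed $S$ the ratio is continuous on this set, and the collection of $S$ with $|S|\le t$ is finite, so the supremum defining $h(p,A,t)$ is attained, and the standing hypothesis then forces it to be strictly less than $1$. The case $p=0$ is even simpler: $|x_i|^0$ depends only on whether $x_i\ne 0$, so the ratio takes only finitely many values as $x$ and $S$ vary, and one may bypass compactness entirely.

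Once the equivalence is in place, Theorem 1 finishes the argument. The right-hand side of the equivalence is precisely the NSP statement $\|x_S\|_p<\|x_{S^{C}}\|_p$ required for every $t$-sparse $x^{\ast}$, where taking $|T^{\ast}|=t$ subsumes all smaller supports; by Theorem 1 this is equivalent to every $t$-sparse vector being the unique solution of the associated $l_p$-minimization. The main obstacle I anticipate is Step 2: upgrading the assumption that every individual ratio is less than $1$ to the uniform bound $h(p,A,t)<1$. Without the compactness argument one could only conclude $h(p,A,t)\le 1$, which would break the implication, so the continuity/coercivity check for $\|\cdot\|_p^p$ in the quasi-norm regime $0<p<1$ is the real load-bearing step. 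Everything else is definition-chasing.
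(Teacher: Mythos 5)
Your proposal is correct and follows exactly the route the paper intends: the paper states this corollary without proof, attributing it to "combining NSC, NSP and papers [2][21]," and your argument is precisely that combination --- translating $h(p,A,t)<1$ into the strict null space inequality of Theorem 1 and back. The one substantive point you add, the compactness/homogeneity argument needed to upgrade pointwise strict inequalities to the uniform bound $h(p,A,t)<1$, is handled correctly and is indeed the only step that is more than definition-chasing.
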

\begin{corollary}
If the condition (5) is satisfied for some $0<p^{\ast}\leq 1$, then it is also satisfied for all the $0\leq p \leq p^{\ast}$ [2],[21].
\end{corollary}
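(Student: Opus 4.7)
\medskip

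\textbf{Proof proposal.} I plan to prove that condition (5) is monotone in $p$: if it holds at some $p^{\ast} \in (0,1]$, it continues to hold at every smaller $p$. The strategy is, for each fixed pair $(x,S)$ with $x \in Ker(A)\setminus\{0\}$ and $|S| \leq |T^{\ast}|$, to deduce the inequality at exponent $p$ from the same inequality at exponent $p^{\ast}$ by a pointwise comparison between $|x_i|^{p}$ and $|x_i|^{p^{\ast}}$, with the comparison constant depending on whether $i$ lies in $S$ or in its complement.

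First I would reduce to a convenient shape for $S$. If $i \in S$ and $j \notin S$ satisfy $|x_i| < |x_j|$, swapping them yields an $S'$ of the same cardinality for which $\sum_{i \in S'}|x_i|^{p} > \sum_{i \in S}|x_i|^{p}$ and $\sum_{i \notin S'}|x_i|^{p} < \sum_{i \notin S}|x_i|^{p}$ simultaneously for every $p > 0$, so (5) at $S'$ implies (5) at $S$. It therefore suffices to prove (5) for those $S$ that contain the $|S|$ largest entries of $|x|$. For such an $S$, set $a := \min\{|x_i| : i \in S,\,x_i \neq 0\}$ and $b := \max\{|x_i| : i \notin S,\,x_i \neq 0\}$; then $a \geq b$, and (5) at $p^{\ast}$ forces $b > 0$ (else $\sum_{i \notin S}|x_i|^{p^{\ast}} = 0$, contradicting the strict inequality with $x \neq 0$). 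Since the largest entry of $|x|$ lies in $S$, we also have $a > 0$.

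The core pointwise estimate uses that $t \mapsto t^{p-p^{\ast}}$ is non-increasing on $(0,\infty)$ because $p - p^{\ast} \leq 0$. Writing $|x_i|^{p} = |x_i|^{p-p^{\ast}}\,|x_i|^{p^{\ast}}$ gives $|x_i|^{p} \leq a^{p-p^{\ast}}|x_i|^{p^{\ast}}$ for $i \in S$ (trivial when $x_i = 0$) and $|x_i|^{p} \geq b^{p-p^{\ast}}|x_i|^{p^{\ast}}$ for $i \notin S$. Summing, invoking $b \leq a$ with $p-p^{\ast} \leq 0$ (hence $b^{p-p^{\ast}} \geq a^{p-p^{\ast}}$), and inserting (5) at $p^{\ast}$ in the middle, yields
\begin{align*}
\sum_{i \notin S}|x_i|^{p} &\geq b^{p-p^{\ast}}\sum_{i \notin S}|x_i|^{p^{\ast}} > b^{p-p^{\ast}}\sum_{i \in S}|x_i|^{p^{\ast}} \\
&\geq a^{p-p^{\ast}}\sum_{i \in S}|x_i|^{p^{\ast}} \geq \sum_{i \in S}|x_i|^{p},
\end{align*}
which is (5) at exponent $p$. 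The endpoint $p = 0$ is included, since positivity of $a$ and $b$ keeps $a^{-p^{\ast}}$ and $b^{-p^{\ast}}$ finite and the same chain survives with $|x_i|^{0}$ interpreted as the support indicator.

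The step I expect to be the main obstacle is the structural one: isolating $a$ and $b$ as the right pivots and justifying the swap reduction so as to force $a \geq b$. Once those choices are committed, the remaining inequalities follow mechanically from the monotonicity of the power function, and the strict inequality in (5) is preserved through the single strict step supplied by the $p^{\ast}$-hypothesis; a secondary subtlety is merely to check that the degenerate endpoint $p = 0$ does not break the chain, which is handled by the positivity of the two pivots.
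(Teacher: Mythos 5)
Your proof is correct. Note that the paper itself supplies no argument for this corollary --- it is simply quoted from references [2] and [21] --- so there is no in-paper proof to compare against; what you have written is a valid, self-contained derivation, and it is essentially the classical Gribonval--Nielsen monotonicity argument from those references. The two ingredients you isolate are exactly the ones that matter: the swap reduction showing it suffices to treat the worst-case $S$ (the one collecting the $|S|$ largest entries of $|x|$), and the two-pivot comparison $|x_i|^p \le a^{p-p^\ast}|x_i|^{p^\ast}$ on $S$ versus $|x_i|^p \ge b^{p-p^\ast}|x_i|^{p^\ast}$ on $S^C$, glued together by $b^{p-p^\ast}\ge a^{p-p^\ast}$ (valid since $b\le a$ and $p-p^\ast\le 0$). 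Strictness survives because the one strict step, inherited from the hypothesis at $p^\ast$, is multiplied by the positive constant $b^{p-p^\ast}$, and the positivity of $b$ is correctly forced by that same hypothesis. The only points left implicit are harmless: $x=0$ must be excluded for (5) to be meaningful at all, and the case $S=\varnothing$ (where $a$ is undefined) is trivial. In [21] the same propagation is proved for a general class of concave sparseness measures $f$ in place of $t\mapsto t^p$, with your pointwise estimate appearing as the special case $f(t)=t^p$; so your route is a streamlined instance of the cited argument rather than a genuinely different one, which is exactly what a reader of this paper would want filled in.
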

\begin{corollary}
In the case when $p=0$, if $x^{\ast}$ is the unique solution to the $l_0$-minimization (1), and $\|x^{\ast}\|_0=t$ , then we have the following results:

(a) $\|x\|_0 \geq 2t+1$, for any $x \in Ker(A)$.

(b) $\displaystyle t \leq \lceil \frac{m-2.5}{2} \rceil+1$, where $\lceil a \rceil$ represents the integer part of $a$
\end{corollary}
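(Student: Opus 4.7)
The plan is to derive both parts directly from the null space property in the case $p=0$, specialized via Theorem 1 to the given unique $l_0$-minimizer $x^{\ast}$ with $|T^{\ast}|=t$. Thus for every nonzero $x\in Ker(A)$ and every index set $S$ with $|S|\leq t$ one has $\|x_S\|_0<\|x_{S^C}\|_0$. Everything reduces to choosing $S$ cleverly.

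For part (a), I would split on whether $\|x\|_0\leq t$ or $\|x\|_0>t$. In the first case, taking $S=support(x)$ gives $|S|\leq t$, $\|x_S\|_0=\|x\|_0$ and $\|x_{S^C}\|_0=0$, which contradicts NSP unless $x=0$. Hence we must be in the second case, $\|x\|_0>t$. I would then pick any $S\subset support(x)$ with $|S|=t$; then $\|x_S\|_0=t$ and $\|x_{S^C}\|_0=\|x\|_0-t$, so NSP yields $t<\|x\|_0-t$, i.e. $\|x\|_0\geq 2t+1$.

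For part (b), I would use that $A\in R^{n\times m}$ with $n<m$, so $\dim Ker(A)\geq m-n\geq 1$, and hence there exists a nonzero $x\in Ker(A)$. Trivially $\|x\|_0\leq m$, and combined with (a) this gives $2t+1\leq m$, i.e. $t\leq (m-1)/2$. Since $t$ is an integer, this is equivalent to $t\leq \lfloor(m-1)/2\rfloor$, which one checks by parity cases on $m$ coincides with $\lceil(m-2.5)/2\rceil+1$ under the paper's convention that $\lceil\cdot\rceil$ denotes integer part.

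I do not expect a serious obstacle: the content is essentially a two-line application of Theorem 1 once the correct $S$ is selected. The only mild subtlety is the first subcase of part (a), which shows that the apparently obvious choice $S=support(x)$ is in fact the one that rules out low-support kernel elements, and the bookkeeping of converting the real inequality $t\leq(m-1)/2$ into the integer-part formula stated in the corollary.
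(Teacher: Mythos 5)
Your proof is correct, and it follows exactly the route the paper intends: the paper states Corollary 3 without proof, merely asserting it follows from NSP (Theorem 1), and your choice of $S=support(x)$ (to exclude kernel vectors of support $\leq t$) and then $S\subset support(x)$ with $|S|=t$ is the standard derivation, with the parity check reconciling $t\leq\lfloor(m-1)/2\rfloor$ with the paper's $\lceil\frac{m-2.5}{2}\rceil+1$ done correctly. The only caveat is inherited from the paper, not introduced by you: both the NSP statement and part (a) should be read as quantifying over \emph{nonzero} $x\in Ker(A)$, as you implicitly do.
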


In paper [9], the author proves that $h(p,A,t)$ is a continuous function in $p\in [0,1]$ when $t\leq spark(A)-1$, where $spark(A)$ is the smallest number of columns from $A$ that are linearly dependent. Therefore, if $h(0,A,t)<1$ for some fixed $A$ and $t$, then there exists a constant $p^{\ast}$ such that $h(p,A,t)<1$ for any $p \in[0,p^{\ast}]$, i.e. both $l_0$-minimization (1) and $l_p$-minimization (4) have the same unique solution. This a corollary of the main theorem in [7].

\begin{theorem}
$[7]$ There exists a constant $p(A,b)>0$ such that, when $0<p<p(A,b)$, every solution to $l_p$-minimization (4) also solves $l_0$-minimization (1).
\end{theorem}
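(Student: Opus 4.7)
The plan is to combine the null space constant reformulation of Theorem 1 (that is, Corollary 1) with the continuity of $p\mapsto h(p,A,t)$ at $p=0$ cited in the paragraph just before the theorem. In other words, I would argue that Theorem 2 is essentially a packaging of the preliminaries once a single hypothesis on the sparsity level is verified.

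First I would fix an $l_0$-optimal solution $x^{\ast}$ and set $t=\|x^{\ast}\|_0$. By Corollary 3(a) every nonzero $x\in Ker(A)$ satisfies $\|x\|_0\geq 2t+1$, which in particular forces $t\leq spark(A)-1$; this is precisely the hypothesis under which the continuity result of [9] applies. Next I would translate $l_0$-optimality into a statement about the null space constant. By Theorem 1 applied at $p=0$, or equivalently by Corollary 1, the uniqueness of $x^{\ast}$ as an $l_0$-minimizer is equivalent to $h(0,A,t)<1$.

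Second I would invoke continuity of $h(\cdot,A,t)$ on $[0,1]$, which is legitimate since $t\leq spark(A)-1$. Because $h(0,A,t)<1$ strictly, continuity at $0$ supplies a constant $p(A,b)>0$ with $h(p,A,t)<1$ for every $p\in[0,p(A,b))$. Reading Corollary 1 in the reverse direction, every $t$-sparse vector is then the unique $l_p$-minimizer of problem (4); in particular $x^{\ast}$ is the unique solution of (4) for each such $p$. Consequently any solution of (4) must coincide with $x^{\ast}$, and hence also solves (1), which is what the theorem asserts.

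The main obstacle, in my view, is not any step in the chain of inferences above --- each one is essentially bookkeeping. It is rather the step that is being imported from [9], namely the continuity of $p\mapsto h(p,A,t)$ on $[0,1]$ under the hypothesis $t\leq spark(A)-1$. If one wanted a self-contained argument, one would write $h(p,A,t)$ as a supremum of ratios $\sum_{i\in S}|x_i|^p/\sum_{i\notin S}|x_i|^p$ taken over $x\in Ker(A)\setminus\{0\}$ with $\|x\|_p=1$ and over $|S|\leq t$, use the hypothesis $t\leq spark(A)-1$ to keep the denominator bounded away from zero (since no $x\in Ker(A)$ has support of size $\leq t$), and then transfer continuity in $p$ through a finite maximum over the subsets $S$ by a standard compactness and uniform-continuity argument. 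The translation back to the $l_p$ problem, given continuity, is immediate.
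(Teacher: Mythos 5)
Your argument is essentially the one the paper itself sketches in the paragraph immediately preceding Theorem 2 (uniqueness of the $l_0$-minimizer gives $h(0,A,t)<1$ via NSP; $\|x\|_0\geq 2t+1$ on $Ker(A)$ gives $t\leq spark(A)-1$; continuity of $h(\cdot,A,t)$ from [9] then yields an interval $[0,p^{\ast})$ on which $h(p,A,t)<1$). Note, however, that the paper does not offer this as a proof of Theorem 2: it explicitly labels the continuity argument as producing only \emph{a corollary of} the main theorem of [7], and Theorem 2 itself is imported from [7] without proof.

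The gap is that your proof carries a hypothesis the statement does not: that $l_0$-minimization (1) has a unique solution. Theorem 2 as quoted from [7] asserts that for small $p$ \emph{every} solution of (4) solves (1), with no uniqueness assumption on (1). Your chain of inferences collapses precisely in the non-unique case: if there are two distinct sparsest solutions $x^1\neq x^2$ with $\|x^i\|_0=t$, then $x^1-x^2$ is a nonzero element of $Ker(A)$ with $\|x^1-x^2\|_0\leq 2t$, the null space property at level $t$ fails, and $h(0,A,t)\geq 1$, so continuity of $h$ gives you nothing. The actual proof in [7] does not go through NSP/NSC at all; it handles the general case by a different (limiting/combinatorial) argument over the finitely many candidate supports. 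Since the rest of the present paper only ever invokes Theorem 2 under the standing assumption that (1) has a unique solution, your argument is adequate for everything downstream, but as a proof of the theorem as stated it proves a strictly weaker result. You should either add the uniqueness hypothesis explicitly or supply a separate argument for the non-unique case.
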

Obviously, this theorem qualitatively proves the effectiveness of solving the original $l_0$-minimization (1) problem through $l_p$-minimization (4). Moreover, the theorem will become more practical, if $p^{\ast}$ is computable. Since the main aim of this paper is to give a computable method for the estimation of $p(A,b)$, we can conclude the following lemma.
\begin{lemma}
Let $A \in R^{n \times m}$ be an underdetermined matrix, if $l_0$-minimization (1) has the unique solution $x^{\ast}$ with $\|x^{\ast}\|_0=t$, then there exists a constant $u>0$ with $u^2 \geq \lambda_{min^{+}}(A^TA)$ such that $\|Az\|_2 \geq u\|z\|_2$, for any $z \in R^m$ with $\|z\|_0 \leq 2t$
\end{lemma}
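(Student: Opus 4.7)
My plan is to reduce the claim to a property of the submatrices $A_S$ indexed by subsets of size at most $2t$. First I invoke Corollary~3(a): since $x^{\ast}$ is the unique solution to $l_0$-minimization with $\|x^{\ast}\|_0=t$, every nonzero $x\in Ker(A)$ satisfies $\|x\|_0\ge 2t+1$. Consequently, for any index set $S\subset\{1,\ldots,m\}$ with $|S|\le 2t$, no nonzero vector supported on $S$ can lie in $Ker(A)$, which means that the columns of $A_S$ are linearly independent. Hence $A_S^T A_S$ is symmetric positive definite and $\lambda_{\min}(A_S^T A_S)>0$.

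Because the family $\{S\subset\{1,\ldots,m\}:|S|\le 2t\}$ is finite, I then set
$$u^2:=\min\bigl\{\lambda_{\min}(A_S^T A_S):|S|\le 2t\bigr\},$$
which, being the minimum of finitely many strictly positive numbers, is itself strictly positive. Given any $z\in R^m$ with $\|z\|_0\le 2t$, choosing $S=support(z)$ and writing $z_S$ for the corresponding restriction yields
$$\|Az\|_2^2=\|A_S z_S\|_2^2=z_S^T (A_S^T A_S) z_S\ge\lambda_{\min}(A_S^T A_S)\,\|z_S\|_2^2\ge u^2\|z\|_2^2,$$
which establishes the required inequality $\|Az\|_2\ge u\|z\|_2$.

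The remaining step is to compare $u^2$ with $\lambda_{\min^+}(A^T A)$. Here I would exploit that $A_S^T A_S$ is, up to a permutation of coordinates, the principal submatrix of $A^T A$ cut out by the indices in $S$, and then invoke Cauchy's interlacing theorem. I expect this to be the main obstacle: the plain interlacing bound only delivers $\lambda_{\min}(A_S^T A_S)\ge\lambda_{\min}(A^T A)=0$, which is vacuous, so one must additionally use the invertibility of $A_S^T A_S$ (already guaranteed above) to discard the $(m-n)$-dimensional null subspace of $A^T A$ and restrict the Rayleigh quotient to the row space of $A$ before applying interlacing. The careful bookkeeping needed to push this refined comparison through to the explicit inequality $u^2\ge\lambda_{\min^+}(A^T A)$ is the part of the argument I would treat most carefully.
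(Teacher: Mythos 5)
Your Step 1 is correct and is, if anything, cleaner than the paper's: the paper establishes the existence of $u>0$ by a sequential-compactness argument (a minimizing sequence of unit-norm $2t$-sparse vectors would converge to a nonzero null vector with $\|z_0\|_0\le 2t$, contradicting Corollary 3), whereas you observe directly that Corollary 3 forces every $A_S$ with $|S|\le 2t$ to have linearly independent columns and then take the minimum of the finitely many positive numbers $\lambda_{\min}(A_S^TA_S)$. Both routes rest on the same input, and yours identifies the optimal constant explicitly.

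The second half, however, is a genuine gap: you have correctly located the obstacle but not closed it, and it cannot be closed, because the asserted inequality $u^2\ge\lambda_{\min^{+}}(A^TA)$ is false in general. Your proposed repair (restrict the Rayleigh quotient to the row space of $A$ before interlacing) does not apply, since the minimizing eigenvector of $A_V^TA_V$, extended by zeros to $R^m$, has no reason to lie in the range of $A^T$; excluding the kernel of $A^TA$ from the variational characterization therefore says nothing about that vector. A concrete counterexample: let
$$A=\begin{pmatrix}1&0&\epsilon\\0&1&\epsilon\end{pmatrix},\qquad b=\begin{pmatrix}1\\0\end{pmatrix},$$
with $\epsilon$ small, so that $x^{\ast}=e_1$ is the unique $l_0$-minimizer and $t=1$. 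For $V=\{1,3\}$ one computes $\det(A_V^TA_V)=\epsilon^2$ and $\mathrm{tr}(A_V^TA_V)=1+2\epsilon^2$, hence $\lambda_{\min}(A_V^TA_V)\approx\epsilon^2$, while the nonzero eigenvalues of $A^TA$ are $1$ and $1+2\epsilon^2$, so $\lambda_{\min^{+}}(A^TA)=1$; any admissible $u$ must satisfy $u^2\le\lambda_{\min}(A_V^TA_V)\approx\epsilon^2\ll 1$. For what it is worth, the paper's own Step 2 is also broken: it extends an eigenvector $z$ of $A_V^TA_V$ by zeros to $x'$ and asserts $A^TAx'=u^2x'$, but the coordinates $(A^TAx')_i=\langle a_i,A_Vz\rangle$ for $i\notin V$ need not vanish. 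So the difficulty you flagged as ``the part to treat most carefully'' is not a bookkeeping issue; it is exactly where the lemma fails.
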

\begin{proof}
The proof is divided into two steps.

Step 1: To prove the existence of $u$

In order to prove this result we just need to prove that the set $S=\{u:  \|Az\|_2/\|z\|_2 \geq u, for\ any\ z \neq 0\ with\ \|z\|_0 \leq 2t\}$ has a nonzero infimum.

If we assume that inf S=0, i.e. for any $n \in N^{+}$, there exists a vector $\|z_n\|_0 \leq 2t$ such that $\|Az_n\|_2/\|z_n\|_2 \leq n^{-1}$. Without of generality, we can assume $\|z_n\|_2=1$, furthermore, the bounded sequence $\{z_n\}$ has a subsequence $\{z_{n_i}\}$ which is convergent, i.e. $z_{n_i} \to z_0$ and it is obvious that $Az_0=0$ because that the function $y(x)=Ax$ is a continuous one.

Let $J(z_0)=\{i:(z_0)_i \ne 0\}$, since $z_{n_i} \to z_0$, it is easy to get that: for any $i \in J(z_0)$, there exists $N_i$ such that $(z_{n_k})_i \ne 0$ when $k \geq N_i$.

Let $ N=\max \limits_{i \in J(z_0)}N_i$, when $k \geq N$, for any $i \in J(z_0)$, such that $(z_{n_k})_i \ne 0$. Therefore, we can get that $\|z_{n_k}\|_0 \geq \|z_0\|_0$ when $k \geq N$, such that $\|z_0\|_0 \leq 2t$.

However, according to Corollary 3, it is easy to get that $\|x\|_0 \geq 2t+1$ for any $x\in Ker(A)$. and we notice that $z_0 \in Ker(A)$, so the result $\|z_0\|_0 \leq 2t$ contradicts Corollary 3.

Therefore, there exists a constant $u>0$ such that $\|Az\|_2 \geq u\|z\|_2$, for any $z \in R^m$ with $\|z\|_0 \leq 2t$.

Step 2: To prove $u^2 \geq \lambda_{min^{+}}(A^TA)$

According to the proof above, there exists a vector $\|\widetilde x\|_0 \leq 2t$ such that $\|A\widetilde x\|_2=u\| \widetilde x\|_2$.

Let $V=support (\widetilde x)$, it is easy to get that $u^2x^Tx \leq x^TA_V^TA_Vx$, for any $x \in R^{|V|}$. It is obvious that $A_V^TA_V \in R^{|V| \times |V|}$ is a symmetric matrix, according to the Rayleigh-Ritz theorem, the smallest eigenvalue of $A_V^TA_V$  is $u^2$ and we can choose an eigenvector $z\in R^{|V|}$ of eigenvalue $u^2$.

If $u^2<\lambda_{min^{+}}(A^TA)$, then we can consider such a vector $x^{'}\in R^m$ with $x_i^{'}=z_i$ when $i \in V$ and zero everywhere else. Therefore, it is easy to get that $A^TAx^{'}=u^2x^{'}$ which contradicts the definition of $\lambda_{min^{+}}(A^TA)$

Therefore, the proof is completed.
\end{proof}
\begin{corollary}
Let $A \in R^{n \times m}$ be an underdetermined matrix, if $l_0$-minimization (1) has the unique solution $x^{\ast}$ with $\|x^{\ast}\|_0=t$, there exist constants $0<u\leq w \leq \lambda_{max}(A^TA)$ such that $u\|x\|_2^2 \leq \|Ax\|_2^2 \leq w\|x\|_2^2 $ for any $\|x\|_0 \leq 2t$.
\end{corollary}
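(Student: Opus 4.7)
The plan is to view this corollary as a direct strengthening of Lemma 1: the lower bound is already essentially contained in that lemma, and only the upper bound needs a separate short argument based on the Rayleigh--Ritz characterization of $\lambda_{max}(A^TA)$.

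First I would handle the lower bound. Lemma 1 furnishes a constant $u_0 > 0$ with $u_0^2 \geq \lambda_{min^{+}}(A^TA)$ such that $\|Az\|_2 \geq u_0\|z\|_2$ for every $z \in R^m$ with $\|z\|_0 \leq 2t$. Squaring this inequality and setting $u := u_0^2$ immediately yields $u\|x\|_2^2 \leq \|Ax\|_2^2$ whenever $\|x\|_0 \leq 2t$, with $u > 0$ by construction.

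For the upper bound, I would define
\[
w \;:=\; \sup\bigl\{\, \|Ax\|_2^2 / \|x\|_2^2 \;:\; 0 \neq x \in R^m,\ \|x\|_0 \leq 2t \,\bigr\}.
\]
The symmetry of $A^T A$ together with the Rayleigh--Ritz theorem gives the global inequality $\|Ax\|_2^2 = x^T A^T A x \leq \lambda_{max}(A^TA)\|x\|_2^2$ for every $x \in R^m$, so the supremum defining $w$ is finite and satisfies $w \leq \lambda_{max}(A^TA)$. The inequality $\|Ax\|_2^2 \leq w\|x\|_2^2$ for any $\|x\|_0 \leq 2t$ then holds by definition of $w$, and the ordering $u \leq w$ is automatic since every vector used to define $u$ is also admissible in the supremum defining $w$.

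I do not anticipate any serious obstacle: the argument is essentially Lemma 1 on one side and a single invocation of Rayleigh--Ritz on the full matrix $A^TA$ on the other. The only point worth being careful about is to make the bounds line up so that the final chain $0 < u \leq w \leq \lambda_{max}(A^TA)$ really does come out as stated, which is why I write $u = u_0^2$ rather than $u = u_0$.
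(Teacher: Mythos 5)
Your argument is correct and is essentially the one the paper intends: the paper states this corollary without proof as an immediate consequence of Lemma 1, and your proposal supplies exactly the expected details, namely Lemma 1 (squared) for the lower bound and the Rayleigh--Ritz bound $x^TA^TAx \leq \lambda_{max}(A^TA)\|x\|_2^2$ for the upper bound. The only minor point worth flagging is the degenerate case $t=0$, where no nonzero $2t$-sparse vector exists and the chain $0<u\leq w$ must be arranged by convention.
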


\section{MAIN CONTRIBUTION}
In this section we focus on the proposed problem in the last section. By introducing a new technique and utilizing preparations provided in Section 2, we will present an analysis expression of $p^{\ast}(A,b)$ such that both $l_0$-minimization(1) and $l_p$-minimization (4) have the same unique solution for $0<p<p^{\ast}(A,b)$. To this end, we first begin with two lemmas.

\begin{lemma}
For any $x \in R^m$ and $0<p \leq 1$, then we have that $\|x\|_p \leq\|x\|_0^{\frac{1}{p}-\frac{1}{2}}\|x\|_2$
\end{lemma}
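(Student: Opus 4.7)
The plan is to reduce the inequality to a finite-dimensional statement on the support of $x$ and then invoke Hölder's inequality with carefully chosen conjugate exponents. Let $S=\operatorname{support}(x)$ and $k=|S|=\|x\|_0$. Since the zero coordinates contribute nothing to either side, the claim is equivalent to
$$
\sum_{i\in S}|x_i|^{p} \;\le\; k^{\,1-p/2}\Bigl(\sum_{i\in S}|x_i|^{2}\Bigr)^{p/2},
$$
which, after raising both sides to the power $1/p$, becomes the desired bound $\|x\|_p\le k^{1/p-1/2}\|x\|_2$.

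Next I would prove the displayed inequality by applying Hölder's inequality on the index set $S$ to the factorization $|x_i|^{p}=|x_i|^{p}\cdot 1$, using the conjugate exponents $q=2/p$ and $q'=2/(2-p)$ (which are valid since $0<p\le 1<2$, so $q>1$). This yields
$$
\sum_{i\in S}|x_i|^{p}\cdot 1 \;\le\; \Bigl(\sum_{i\in S}\bigl(|x_i|^{p}\bigr)^{2/p}\Bigr)^{p/2}\Bigl(\sum_{i\in S}1^{2/(2-p)}\Bigr)^{(2-p)/2} \;=\; \|x\|_{2}^{\,p}\cdot k^{(2-p)/2}.
$$
Taking $p$-th roots gives exactly $\|x\|_p\le k^{1/p-1/2}\|x\|_2$, as required.

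There is essentially no obstacle here: the only point that requires any care is checking that the Hölder exponents $2/p$ and $2/(2-p)$ are genuinely conjugate and both positive for $0<p\le 1$, which is immediate. An equivalent presentation would invoke the power-mean inequality (the $p$-mean is dominated by the $2$-mean on a $k$-point uniform space), but the Hölder route seems cleanest and it leaves the boundary case $p=1$ (where the exponent $1/p-1/2=1/2$ and the bound reduces to the standard Cauchy--Schwarz estimate $\|x\|_1\le\sqrt{k}\,\|x\|_2$) as a transparent sanity check.
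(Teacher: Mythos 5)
Your proof is correct and is essentially identical to the paper's: both restrict to the support of $x$ and apply H\"older's inequality with conjugate exponents $2/p$ and $2/(2-p)$ to the factorization $|x_i|^p\cdot 1$, then take $p$-th roots. No substantive difference.
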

\begin{proof}
For any $x \in R^m$, without loss of generality, we can rearrange the elements of $x$ such that $x_i=0$ $(i \in \{\|x\|_0+1,\ldots,m\})$. According to H\"older inequality, we can show that
$$\|x\|_p^p=\sum_{i=1}^{\|x\|_0} |x_i|^p \leq (\sum_{i=1}^{\|x\|_0} (|x_i|^p)^{\frac{2}{p}})^{\frac{p}{2}}(\sum_{i=1}^{\|x\|_0} 1)^{1-\frac{p}{2}}=\|x\|_0^{1-\frac{p}{2}}(\|x\|_2^2)^{\frac{p}{2}}=\|x\|_0^{1-\frac{p}{2}} \|x\|_2^p$$
that is $\|x\|_p \leq\|x\|_0^{\frac{1}{p}-\frac{1}{2}}\|x\|_2$.
\end{proof}
\begin{lemma}
Given a matrix $A \in R^{n \times m}$. If $u\|x\|_2 \leq \|Ax\|_2 \leq w\|x\|_2$ holds for any $\|x\|_0 \leq 2t$.
 then we have that $$\displaystyle |\langle Az_1,Az_2\rangle| \leq \frac{w^2-u^2}{2}\|z_1\|_2\|z_2\|_2$$ for any $z_1$ and $z_2$,with $\|z_i\|_0 \leq t\ (i=1,2)$ and $support(z_1) \cap support(z_2)= \varnothing$
\end{lemma}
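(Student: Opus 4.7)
The plan is to reduce to the case of unit vectors and then exploit the parallelogram/polarization identity applied to $z_1+z_2$ and $z_1-z_2$, the crux being that disjointness of supports keeps these sums within the sparsity level for which the two-sided bound holds.

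First I would normalize. By bilinearity of the inner product, dividing $z_1$ by $\|z_1\|_2$ and $z_2$ by $\|z_2\|_2$ leaves both sides of the inequality homogeneous, so it suffices to prove $|\langle Az_1,Az_2\rangle|\leq (w^2-u^2)/2$ under the additional assumption $\|z_1\|_2=\|z_2\|_2=1$. (I should note the trivial case where one of the $z_i$ is zero separately, in which both sides are zero.)

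Next I would set $y_{\pm}=z_1\pm z_2$ and collect three facts from the hypotheses. Because $\operatorname{support}(z_1)\cap\operatorname{support}(z_2)=\varnothing$, we have $\|y_{\pm}\|_0\leq \|z_1\|_0+\|z_2\|_0\leq 2t$, so the given two-sided bound applies to both $y_+$ and $y_-$. Also by disjointness, $\langle z_1,z_2\rangle=0$, hence $\|y_{\pm}\|_2^2=\|z_1\|_2^2+\|z_2\|_2^2=2$. Applying the hypothesis to $y_{\pm}$ then yields
\begin{equation*}
2u^2\;\leq\;\|Ay_+\|_2^2\;\leq\;2w^2,\qquad 2u^2\;\leq\;\|Ay_-\|_2^2\;\leq\;2w^2.
\end{equation*}

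Finally I would invoke the polarization identity
\begin{equation*}
4\langle Az_1,Az_2\rangle \;=\; \|A(z_1+z_2)\|_2^2-\|A(z_1-z_2)\|_2^2,
\end{equation*}
which is an algebraic identity in any inner-product space. Substituting the previous bounds gives $4\langle Az_1,Az_2\rangle\leq 2w^2-2u^2$ (upper inequality on the positive term, lower on the negative term) and, symmetrically, $4\langle Az_1,Az_2\rangle\geq -(2w^2-2u^2)$. Therefore $|\langle Az_1,Az_2\rangle|\leq (w^2-u^2)/2$, and rescaling back through $\|z_1\|_2\|z_2\|_2$ finishes the proof.

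I do not foresee a real obstacle here; the only place to be careful is ensuring that disjoint supports give both $\|z_1\pm z_2\|_0\leq 2t$ \emph{and} the clean equality $\|z_1\pm z_2\|_2^2=\|z_1\|_2^2+\|z_2\|_2^2$, since both are needed simultaneously to squeeze the bound out of the hypothesis. Once that observation is in place, polarization does the rest mechanically.
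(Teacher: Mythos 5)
Your proof is correct and follows essentially the same route as the paper: the polarization identity for $\langle Az_1,Az_2\rangle$, combined with the observation that disjoint supports give both $\|z_1\pm z_2\|_0\le 2t$ and $\|z_1\pm z_2\|_2^2=\|z_1\|_2^2+\|z_2\|_2^2$. The one genuine difference is your initial normalization to $\|z_1\|_2=\|z_2\|_2=1$, and it is worth keeping: the paper works with unnormalized vectors and closes with the step
$$\tfrac{1}{4}(w^2-u^2)\bigl(\|z_1\|_2^2+\|z_2\|_2^2\bigr)\le \tfrac{w^2-u^2}{2}\|z_1\|_2\|z_2\|_2,$$
which is the reverse of the AM--GM inequality and false whenever $\|z_1\|_2\ne\|z_2\|_2$. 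In your normalized setting that step becomes an equality ($2=2\|z_1\|_2\|z_2\|_2$), and homogeneity of both sides of the claimed bound lets you rescale back, so your version is the rigorous form of the argument the paper intended.
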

\begin{proof}
According to the assumption on matrix $A$, it is easy to get that
$$|\langle Az_1,Az_2 \rangle|=\frac{1}{4}\left|\|Az_1+Az_2\|_2^2-\|Az_1-Az_2\|_2^2\right| \leq \frac{1}{4}(w^2\|z_1+z_2\|_2^2-u^2\|z_1-z_2\|_2^2) )$$
Since $support(z_1) \cap support(z_2)= \varnothing$, we have that $$\|z_1+z_2\|_2^2=\|z_1-z_2\|_2^2=\|z_1\|_2^2+\|z_2\|_2^2$$
From which we get that $$\displaystyle|\langle Az_1,Az_2 \rangle|\leq \frac{1}{4}(w^2-u^2)(\|z_1\|_2^2+\|z_2\|_2^2) \leq \frac{w^2-u^2}{2}\|z_1\|_2\|z_2\|_2$$
\end{proof}
With the above lemmas in hand, we now can prove our main theorems
\begin{theorem}
Given a matrix $A \in R^{n \times m}$ with $n \leq m$. If $x^{\ast}$ is the unique solution to $l_0$-minimization (1), for an arbitrary index set $\tau_0 \subset \{1,2,\ldots,m\}$ with $|\tau_0|=\|x^{\ast}\|_0=t$ and for any $x \in Ker(A)$, we have the following inequality:
\begin{eqnarray}
\|x_{\tau _0}\|_p \leq h^{*}(p,A,t)\|x_{\tau _0^C}\|_p
\end{eqnarray}
where $h^{*}(p,A,t)=\displaystyle\frac{\sqrt2 +1}{2}\left(\frac{t}{t+1}\right)^{\frac{1}{p}}\left[\frac{(\lambda -1)(m-2-t)}{2t}+\left(\lambda +\sqrt{\frac{1}{t+1}}\right)t^{- \frac{1}{2}}\right]$\\ and $\lambda=\displaystyle\frac{\lambda_{max}(A^TA)}{\lambda_{min^{+}}(A^TA)}$
\end{theorem}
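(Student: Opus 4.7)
The plan is to mimic the classical Restricted Isometry Property argument of Cand\`es and Tao, interleaving the $\ell_2$ steps with Lemma 3 in order to swap between $\ell_2$ and the $\ell_p$ quasinorm. The available ingredients are Corollary 1, which provides two-sided control $U\|z\|_2^2\le\|Az\|_2^2\le W\|z\|_2^2$ with $W/U\le\lambda$ for $\|z\|_0\le 2t$; Lemma 2, which yields the near-orthogonality estimate $|\langle Az_1,Az_2\rangle|\le\tfrac12(W-U)\|z_1\|_2\|z_2\|_2$ on disjoint supports of size at most $t$; Lemma 3, which converts $\ell_p$-norms into $\ell_2$-norms; and Corollary 3, which forces $\|x\|_0\ge 2t+1$ for every $x\in\mathrm{Ker}(A)$, and hence $\|x_{\tau_0^C}\|_0\ge t+1$.

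First I would partition $\tau_0^C$ into sorted blocks $\tau_1,\tau_2,\ldots,\tau_K$, where $\tau_1$ consists of the $t+1$ largest-magnitude indices of $x$ restricted to $\tau_0^C$ (which exist thanks to Corollary 3) and the remaining $\tau_j$ each have size at most $t$, the enumeration arranged so that $|x_i|\ge|x_{i'}|$ whenever $i\in\tau_{j-1}$ and $i'\in\tau_j$. The deliberate choice $|\tau_1|=t+1$ is what eventually produces the characteristic prefactor $(t/(t+1))^{1/p}$, since the maximum entry in $\tau_2$ is then controlled by $\|x_{\tau_1}\|_p/(t+1)^{1/p}$ rather than by $\|x_{\tau_1}\|_p/t^{1/p}$.

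Next I use $x\in\mathrm{Ker}(A)$ to write $Ax_{\tau_0}=-\sum_{j\ge 1}Ax_{\tau_j}$ and take the inner product with $Ax_{\tau_0}$. The lower bound $U\|x_{\tau_0}\|_2^2\le\|Ax_{\tau_0}\|_2^2$ from Corollary 1, together with Lemma 2 applied pairwise to $(\tau_0,\tau_j)$ (splitting $\tau_1$ into a size-$t$ piece plus a singleton so that the hypothesis $|\tau_j|\le t$ of Lemma 2 is met), yields an inequality of the form $\|x_{\tau_0}\|_2\le\tfrac{\lambda-1}{2}\sum_{j\ge 1}\|x_{\tau_j}\|_2$ up to boundary corrections arising from the split of $\tau_1$. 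I would then transfer to $\ell_p$-norms: on the left through $\|x_{\tau_0}\|_p\le t^{1/p-1/2}\|x_{\tau_0}\|_2$ (Lemma 3), and on the right through the sorting estimate $\|x_{\tau_j}\|_2\le|\tau_j|^{1/2}\|x_{\tau_{j-1}}\|_p/|\tau_{j-1}|^{1/p}$ for $j\ge 2$, with Lemma 3 used once more for the remaining block. The number of size-$t$ blocks in $\tau_0^C$ after $\tau_1$ is about $(m-2t-1)/t$; summing $\tfrac{\lambda-1}{2}$ per block and collecting terms produces the bulk coefficient $(\lambda-1)(m-t-2)/(2t)$ inside the bracket of $h^*$, while the split-$\tau_1$ and first-block contributions supply the boundary term $(\lambda+\sqrt{1/(t+1)})t^{-1/2}$.

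The main obstacle will be the exact bookkeeping needed to produce the clean prefactor $(\sqrt{2}+1)/2$. This factor does not come directly from Lemma 2 but from the recombination step where the size-$t$ and singleton pieces of $\tau_1$ are merged back together: the elementary inequality $a+b\le\sqrt{2}\sqrt{a^2+b^2}$ converts the sum of the two Lemma 2 contributions into a single $\|x_{\tau_1}\|_2$ term, and a subsequent algebraic adjustment yields the full $(\sqrt{2}+1)/2$. Once the block decomposition and this combining step are in place, what remains is careful but routine algebra rather than any new idea.
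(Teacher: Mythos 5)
Your overall architecture (block decomposition of $\tau_0^C$ with a first block of size $t+1$, near-orthogonality for the cross terms, H\"older and sorting estimates to pass between $\ell_2$ and $\ell_p$) matches the paper's, but there is a genuine gap in how you treat the block $\tau_1$, and it is precisely the step the paper's proof is built to avoid. Taking the inner product of $Ax_{\tau_0}=-\sum_{j\ge1}Ax_{\tau_j}$ with $Ax_{\tau_0}$ alone leaves the cross term $\langle Ax_{\tau_0},Ax_{\tau_1^{(1)}}\rangle$ on the right, where $\tau_1^{(1)}$ is the size-$t$ top part of $\tau_1$. That block consists of the \emph{largest} entries of $x_{\tau_0^C}$, so the sorting argument gives it no $(t+1)^{-1/p}$ decay; the only available bound is of the form $\|x_{\tau_1^{(1)}}\|_2\le\|x_{\tau_1}\|_p$, and after multiplying by the factor $t^{1/p-1/2}$ coming from the H\"older step on the left, this term blows up as $p\to0$ instead of contributing to the prefactor $(t/(t+1))^{1/p}$ that makes $h^{*}<1$ achievable for small $p$. (Your plan to use the H\"older lemma ``once more for the remaining block'' goes in the wrong direction: it bounds $\|\cdot\|_p$ above by $\|\cdot\|_2$, not conversely.) The paper escapes this by absorbing $\tau_1^{(1)}$ into the head: it applies the lower frame bound to $x_{\tau_0}+x_{\tau_1^{(1)}}$ (support size $2t$, still admissible) and keeps $\|x_{\tau_0}\|_2^2+\|x_{\tau_1}\|_2^2$ on the left, so that the only $\tau_1$-term surviving on the right is the singleton $x_{\tau_1^{(2)}}$, whose single entry is the $(t+1)$-st largest of $x_{\tau_0^C}$ and hence is bounded by $(t+1)^{-1/p}\|x_{\tau_1}\|_p$.

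Relatedly, your account of the constant $(\sqrt2+1)/2$ is not the right mechanism. It does not arise from recombining the two pieces of $\tau_1$ via $a+b\le\sqrt2\sqrt{a^2+b^2}$; it arises from closing the two-variable quadratic inequality $\|x_{\tau_0}\|_2^2+\|x_{\tau_1}\|_2^2\le B\left(\|x_{\tau_0}\|_2+\|x_{\tau_1}\|_2\right)$ by completing the square, which yields $\|x_{\tau_0}\|_2\le\frac{1+\sqrt2}{2}B$. That quadratic structure exists only because both head blocks are kept on the left-hand side, so the two defects are really one: without the enlarged head your argument produces neither the $(t/(t+1))^{1/p}$ factor nor the $(\sqrt2+1)/2$. (A cosmetic point: your citations are permuted relative to the paper's numbering --- the near-orthogonality estimate is the paper's Lemma 3, the H\"older bound is its Lemma 2, and the two-sided bound is its Corollary 4.)
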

\begin{proof}
According to Theorem 1 and Corollary 3, it is easy to get that $\|x\|_0 \geq 2t+1$ for any $x \in Ker(A)$.

According to Lemma 1 and Corollary 4, we can find constants $u$ and $w$ such that $$u\|z\|_2 \leq \|Az\|_2 \leq w\|z\|_2$$ for any $\|z\|_0 \leq 2t$. Now we consider a vector $x \in Ker(A)$, and consider an arbitrary index set $\tau_0 \subset \{1,2,\ldots,m\}\ \ with\ |\tau_0|=t$. We partition the complement of $\tau_0$

$\tau_1$=\{indices of the largest $t+1$ absolute-values component of $x$ except $\tau_0$ \}

$\tau_2$=\{indices of the largest t absolute-values component of $x$ except $\tau_0$ and $\tau_1$\}

$\tau_3$=\{indices of the largest t absolute-values component of $x$ except $\tau_0$ , $\tau_1$\ and  $\tau_2$\}

\ldots

$\tau_k$=\{indices of the rest component of $x$\}

Obviously, the set $\tau_i\ (i=2\ldots k-1)$ has $t$ elements except $\tau_k$ possibly. It is obvious that every element in the set $\tau_1$ is nonzero, so we consider that

$\tau_1^{(1)}$=\{indices of the $t$ largest absolute-values components of $\tau_1$ \}

$\tau_1^{(2)}$=\{indices of the rest components of $\tau_1$ \}. Such that $\tau_1=\tau_1^{(1)} \cup \tau_1^{(2)}$

Since $u\|z\|_2 \leq \|Az\|_2 \leq w\|z\|_2$ for any $\|z\|_0 \leq 2t$. we have that
\begin{eqnarray}
\displaystyle\notag\|x_{\tau_0}\|_2^2+\|x_{\tau_1}\|_2^2&=&\|x_{\tau_0}\|_2^2+\|x_{\tau_1^{(1)}}\|_2^2+\|x_{\tau_1^{(2)}}\|_2^2 \\
\notag&=&\|x_{\tau_0}+x_{\tau_1^{(1)}}\|_2^2+\|x_{\tau_1^{(2)}}\|_2^2\\
&\leq& \displaystyle\frac{1}{u^2}\|A(x_{\tau_0}+x_{\tau_1^{(1)}})\|_2^2+\|x_{\tau_1^{(2)}}\|_2^2
\end{eqnarray}
Since $x=x_{\tau_0}+x_{\tau_1^{(1)}}+x_{\tau_1^{(2)}}+x_{\tau_2}+\ldots+x_{\tau_k} \in Ker(A)$, we have that
\begin{eqnarray}
\notag\|A(x_{\tau_0}+x_{\tau_1^{(1)}})\|_2^2&=&\langle A(-x_{\tau_0}-x_{\tau_1^{(1)}},A(x_{\tau_1^{(2)}}+x_{\tau_2}+\ldots+x_{\tau_k})\rangle\\
 \notag&=&\langle A(-x_{\tau_0}-x_{\tau_1^{(1)}}),Ax_{\tau_1^{(2)}}\rangle +\\
 & &\displaystyle\sum_{i=2}^k(\langle A(-x_{\tau_0}),Ax_{\tau_1}\rangle+\langle A(-x_{\tau_1}^{(1)}),Ax_{\tau_1}\rangle)
\end{eqnarray}
According to Lemma 3, we get that
\begin{eqnarray}
\begin{split}
&\langle A(-x_{\tau_0}),Ax_{\tau_i}\rangle \leq \frac{w^2-u^2}{2}\|x_{\tau_0}\|_2\|x_{\tau_i}\|_2\\
&\langle A(-x_{\tau_1}^{(1)}),Ax_{\tau_i}\rangle \leq \frac{w^2-u^2}{2}\|x_{\tau_1}^{(1)}\|_2\|x_{\tau_i}\|_2
\end{split}
\end{eqnarray}
substituting the inequalities (9) into (8), we have
\begin{eqnarray}
\notag \|A(x_{\tau_0}+x_{\tau_1^{(1)}})\|_2^2 & \leq &\|A(-x_{\tau_0}-x_{\tau_1^{(1)}})\|_2\|Ax_{\tau_1}^{(2)}\|+\\
\notag & & \frac{w^2-u^2}{2}(\sum_{i=2}^k\|x_{\tau_i}\|_2)(\|x_{\tau_0}\|_2+\|x_{\tau_1^{(1)}}\|_2)\\
\notag & \leq & w^2(\|x_{\tau_0}\|_2+\|x_{\tau_1^{(1)}}\|_2)\|x_{\tau_1^{(2)}}\|_2+\\
& &\frac{w^2-u^2}{2}(\sum_{i=2}^k\|x_{\tau_i}\|_2)(\|x_{\tau_0}\|_2+\|x_{\tau_1^{(1)}}\|_2)
\end{eqnarray}
By the definition of $x_{\tau_1^{(1)}}$ and $x_{\tau_1}$, it is easy to get that $\|x_{\tau_1^{(1)}}\|_2 \leq \|x_{\tau_1}\|_2$ and $\displaystyle \|x_{\tau_1^{(2)}}\|_2 \leq \sqrt{\frac{1}{t+1}}\|x_{\tau_1}\|_2$, and hence,
\begin{eqnarray}
\displaystyle\|x_{\tau_1^{(2)}}\|_2^2 \leq (\|x_{\tau_0}\|_2+\|x_{\tau_1}\|_2)\sqrt{\frac{1}{t+1}}\|x_{\tau_1^{(2)}}\|
\end{eqnarray}
Substituting the inequalities (11) and (10) into (7)
\begin{eqnarray}
\displaystyle\notag\|x_{\tau_0}\|_2^2+\|x_{\tau_1}\|_2^2 & \leq & \displaystyle\frac{1}{u^2}\|A(x_{\tau_0}+x_{\tau_1^{(1)}})\|_2^2+\|x_{\tau_1^{(2)}}\|_2^2\\
\notag& \leq &  (\|x_{\tau_0}\|_2+\|x_{\tau_1}\|_2)(\displaystyle\frac{w^2-u^2}{2u^2} \displaystyle\sum_{i=2}^k \|x_{\tau_i}\|_2+\\
&  &\left(\frac{w^2}{u^2}+\sqrt{\frac{1}{t+1}}\right)\|x_{\tau_1^{(2)}}\|_2)
\end{eqnarray}

For any $i \geq 2$ and any element $a$ of $x_{\tau_i}$, it is easy to get that $|a|^p \leq \displaystyle\frac{1}{t+1}\|x_{\tau_1}\|_p^p$, so that $\|x_{\tau_i}\|_2^2 \leq t(t+1)^{-\frac{2}{p}}\|x_{\tau_1}\|_p^2$ and $\|x_{\tau_i}^{(2)}\|_2^2 \leq (t+1)^{-\frac{2}{p}}\|x_{\tau_1}\|_p^2$

Substituting the inequalities into (12), we can derive that
\begin{eqnarray}
\displaystyle\notag
\|x_{\tau_0}\|_2^2+\|x_{\tau_1}\|_2^2 & \leq & (\|x_{\tau_0}\|_2+\|x_{\tau_1}\|_2)(\frac{w^2-u^2}{2u^2}t^{\frac{1}{2}}(t+1)^{- \frac{1}{p}}(k-1)+\\
&  &(\frac{w^2}{u^2}+\sqrt{\frac{1}{t+1}})(t+1)^{-\frac{1}{p}})\|x_{\tau_1}\|_p
\end{eqnarray}

We denote $\displaystyle r=\frac{w^2}{u^2}$, and $\displaystyle B=\left(\frac{w^2-u^2}{2u^2}t^{\frac{1}{2}}(t+1)^{ -\frac{1}{p}}(k-1)+(\frac{w^2}{u^2}+\sqrt{\frac{1}{t+1}})(t+1)^{-\frac{1}{p}}\right)\|x_{\tau_1}\|_p$ \\then we can get such an inequality, $$\|x_{\tau_0}\|_2^2+\|x_{\tau_1}\|_2^2 \leq B(\|x_{\tau_0}\|_2+\|x_{\tau_1}\|_2)$$ and $$\left(\|x_{\tau_0}\|_2-\displaystyle\frac{B}{2}\right)^2-\left(\|x_{\tau_1}\|_2-\displaystyle\frac{B}{2}\right)^2\leq \displaystyle\frac{B^2}{2}$$ therefore, $\|x_{\tau_0}\|_2 \leq \displaystyle\frac{\sqrt2+1}{2}B$

According to Lemma 2, we have that: $$\displaystyle \|x_{\tau_0}\|_p \leq t^{\frac{1}{p}-\frac{1}{2}}\|x_{\tau_0}\|_2 \leq t^{\frac{1}{p}-\frac{1}{2}}(\sqrt2+1)\displaystyle\frac{B}{2}$$

Substituting B into this inequality, we can obtain that
\begin{eqnarray}
\displaystyle\notag\|x_{\tau_0}\|_p & \leq & t^{\frac{1}{p}-\frac{1}{2}}\left(\frac{\sqrt2+1}{2}\right)\left(\frac{r-1}{2}t^{\frac{1}{2}}(t+1)^{-\frac{1}{p}}(k-1)+
(r+\sqrt{\frac{1}{t+1}})(t+1)^{-\frac{1}{p}}\right)\|x_{\tau_1}\|_p\\
& \leq & \left({\frac{t}{t+1}}\right)^{\frac{1}{p}}\left(\frac{\sqrt2+1}{2}\right)\left(\frac{r-1}{2}(k-1)+
(r+\sqrt{\frac{1}{t+1}})t^{-\frac{1}{2}}\right)\|x_{\tau_1}\|_p
\end{eqnarray}
We notice that the sets $\tau_0$ and $\tau_i$($i=2\ldots k-1$) all have $t$ elements and the set $\tau_1$ has $t+1$ elements, such that $kt+2 \leq m \leq (k+1)t+1$, we can get that $k \leq \displaystyle\frac{m-2}{t}$

According to Lemma 1, we have that $r=\displaystyle\frac{w^2}{u^2} \leq \lambda =\displaystyle\frac{\lambda_{max}(A^TA)}{\lambda_{min}(A^TA)}$. Substituting the inequalities into (14), we can obtain
\begin{eqnarray}
\notag\|x_{\tau_0}\|_p &\leq & \displaystyle\frac{\sqrt2 +1}{2}\left(\frac{t}{t+1}\right)^{\frac{1}{p}}\left(\frac{(\lambda -1)(m-2-t)}{2t}+\left(\lambda +\sqrt{\frac{1}{t+1}}\right)t^{- \frac{1}{2}}\right)\|x_{\tau_1}\|_p\\
& \leq &h^{\ast}(p,A,t)\|x_{\tau_0^C}\|_p
\end{eqnarray}

Therefore, the proof is completed.
\end{proof}
Theorem 3 presents a result which is very similar to the result in Theorem 1. However, it is worth to being pointed out that the constant $h^{\ast}(p,A,t)$ plays a central role in theorem 3. In fact, we can treat $h^{\ast}(p.A,t)$ as an estimation to $h(p,A,t)$ in Definition 1, where the former is calculateble and while the latter is NP-hard, so $h^{\ast}(p,A,t)$ may be considered as an improvement of $h(p,A,t)$. According to Theorem 1, if we take $t$ as the $l_0$-norm of the unique solution to $l_0$-minimization, then we can get the main contribution as soon as the inequality $h^{\ast}(p,A,t)<1$ is satisfied.
\begin{theorem}
Assume $A \in R^{n \times m}$ is an underdetermined matrix of full rank, and denote $S^*=|support(A^T(AA^T)^{-1}b)|$. If $l_0$-minimization (1) has an unique solution, then $l_p$-minimization\ (4) has the same unique solution as $l_0$-minimization (1) for any $0<p<p^*(A,b)$, where $\displaystyle p^*(A,b)=max\{h(S^*),h(\lceil \frac{m-2.5}{2} \rceil+1)\}$
$$\displaystyle h(x)=\frac{ln (x+1)-ln x}{ln \left[ \left(\displaystyle\frac{\sqrt2+1}{2}\right) \left[\displaystyle\frac{(\lambda-1)(m-3)}{2}+\lambda+\displaystyle\sqrt\frac{1}{2}\right] \right]}\
and\ \lambda=\displaystyle\frac{\lambda_{max}(A^TA)}{\lambda_{min^{+}}(A^TA)}$$
\end{theorem}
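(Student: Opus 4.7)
The plan is to combine Theorem 3 (which produces $\|x_{\tau_0}\|_p\le h^{\ast}(p,A,t)\|x_{\tau_0^C}\|_p$ for index sets of size exactly $t$) with the NSP characterization in Theorem 1, reducing the entire statement to the scalar inequality $h^{\ast}(p,A,t)<1$, which is then inverted to yield an explicit upper bound on $p$.

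Set $t=\|x^{\ast}\|_0$. The vector $A^T(AA^T)^{-1}b$ is itself a solution of $Ax=b$, so sparsity of $x^{\ast}$ yields $t\le S^{\ast}$; Corollary 3(b) gives the second bound $t\le\lceil(m-2.5)/2\rceil+1$. For any $S$ with $|S|\le t$, I would extend $S$ to some $\tau_0\supseteq S$ with $|\tau_0|=t$; monotonicity of $\sum|\cdot|^p$ in the index set forces $\|x_S\|_p\le\|x_{\tau_0}\|_p$ and $\|x_{\tau_0^C}\|_p\le\|x_{S^C}\|_p$, so Theorem 3 promotes to $\|x_S\|_p\le h^{\ast}(p,A,t)\|x_{S^C}\|_p$. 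By Corollary 3(a), any $x\in Ker(A)\setminus\{0\}$ satisfies $\|x\|_0\ge 2t+1$, so $\|x_{S^C}\|_p>0$; hence $h^{\ast}(p,A,t)<1$ delivers the strict NSP required by Theorem 1.

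Writing $C(t)=\frac{\sqrt 2+1}{2}\bigl[\frac{(\lambda-1)(m-2-t)}{2t}+\bigl(\lambda+\sqrt{1/(t+1)}\bigr)t^{-1/2}\bigr]$, one has $h^{\ast}(p,A,t)=(t/(t+1))^{1/p}\,C(t)$. Since $t/(t+1)<1$, taking logarithms and dividing by the negative quantity $\ln(t/(t+1))$ turns $h^{\ast}(p,A,t)<1$ into $p<\ln((t+1)/t)/\ln C(t)$, provided $C(t)>1$. Each of the $t$-dependent pieces $(m-2-t)/(2t)$, $t^{-1/2}$, $(t+1)^{-1/2}$ is decreasing in $t$, so $C(t)\le C(1)$; this yields $\ln((t+1)/t)/\ln C(t)\ge h(t)$, the denominator of $h(x)$ in the statement being exactly $\ln C(1)$. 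Since $\ln(1+1/x)$ is decreasing in $x$, $h$ is itself decreasing, so combining with $t\le\min\{S^{\ast},\lceil(m-2.5)/2\rceil+1\}$ gives $h(t)\ge\max\{h(S^{\ast}),h(\lceil(m-2.5)/2\rceil+1)\}=p^{\ast}(A,b)$. Thus $p<p^{\ast}(A,b)$ forces $h^{\ast}(p,A,t)<1$, and by the reduction above the proof is complete.

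The main bookkeeping obstacle will be the sign handling at the logarithm step, since $\ln(t/(t+1))<0$ flips the inequality, and one must check that $C(1)>1$ so that $\ln C(1)>0$; when this fails, $h^{\ast}(p,A,t)<1$ already holds for all $p>0$ and the theorem is trivially satisfied. Beyond that, the argument reduces to routine single-variable monotonicity checks, all of which go through cleanly when $\lambda>1$ and $m$ is at least moderate.
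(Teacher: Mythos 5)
Your proposal is correct and follows essentially the same route as the paper: reduce to $h^{\ast}(p,A,t)<1$ via Theorem 3 and the NSP, bound the $t$-dependent bracket by its value at $t=1$ (the paper's $f(t,p)$ is exactly your $(t/(t+1))^{1/p}C(1)$), use $t\le\min\{S^{\ast},\lceil\frac{m-2.5}{2}\rceil+1\}$ together with the relevant monotonicity, and invert the logarithm to obtain $h(x)$. You merely reorder the steps (inverting before majorizing $C(t)$ by $C(1)$) and in fact supply two details the paper leaves implicit — the promotion from sets of size exactly $t$ to all $|S|\le t$, and the strictness of the NSP inequality via Corollary 3(a) — so no changes are needed.
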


\begin{proof}
 According to Theorem 3 and Theorem 1, we can get the equivalence between $l_0$-minimization (1) and $l_p$-minimization (4) as soon as $h^{\ast}(p,A,t)<1$ is satisfied. However $t$ can't be calculated directly. We need to estimate $t$ and change the inequality $h^{\ast}(p,A,t)<1$ into a computable one through inequality technique.

 Due to the integer-value virtue of $\|x\|_0$, we can have that$$(\lambda+\sqrt{\displaystyle\frac{1}{t+1}})t^{-\frac{1}{2}} \leq \lambda +\sqrt{\displaystyle\frac{1}{2}}$$ and $$\displaystyle\frac{(\lambda-1)(m-2-t)}{2t}\leq \displaystyle\frac{(\lambda-1)(m-3)}{2}$$
Therefore,$$\displaystyle h^{\ast}(p,A,t) \leq \frac{\sqrt2+1}{2}(\frac{t}{t+1})^{\frac{1}{p}}(\frac{(\lambda-1)(m-3)}{2}+\lambda+\sqrt{\frac{1}{2}})$$.

\begin{figure}[h]
\centering
\includegraphics[width=5.00in,height=3.00in]{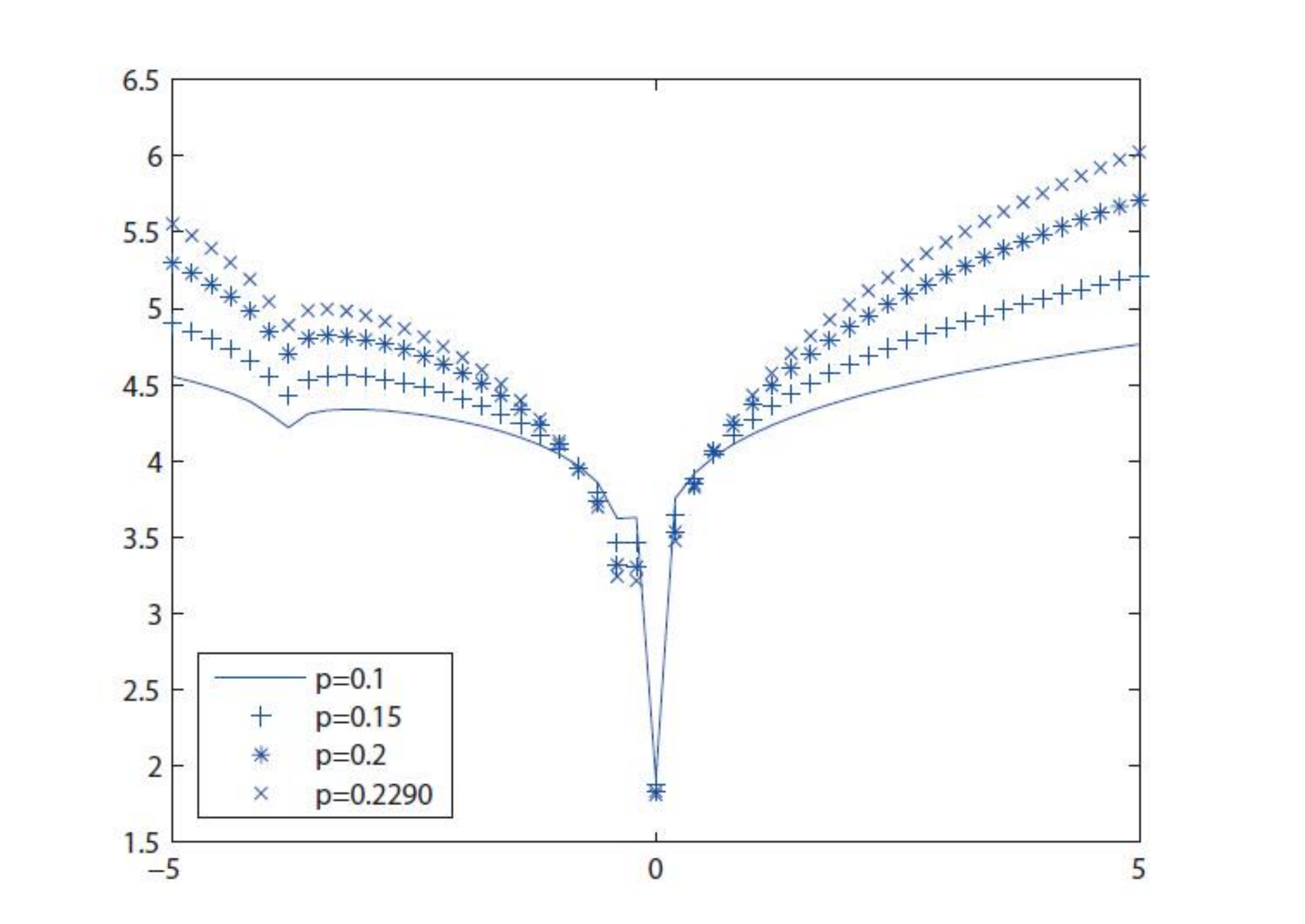}
\caption{the p-norm of the solutions to the different cases in Example 1}
\end{figure}
 According to Corollary 3, we have that $t \leq \lceil \frac{m-2.5}{2} \rceil+1$. It is obvious that $x=A^T(AA^T)^{-1}b$ is a solution to the underdetermined system $Ax=b$ so that $t \leq |S^{\ast}|$ where $S^{\ast}=|support(A^T(AA^T)^{-1}b)|$

 Furthermore, it is easy to prove that the function $$\displaystyle f(t,p)=\frac{\sqrt2+1}{2}(\frac{t}{t+1})^{\frac{1}{p}}(\frac{(\lambda-1)(m-3)}{2}+\lambda+\sqrt{\frac{1}{2}})$$ is an increasing function in $t$ when $p$ is fixed. Meanwhile $f(t,p)$ is a decreasing function in $p$ when $t$ is fixed

Since $f(t,p) \leq min\{ f(\lceil \frac{m-2.5}{2} \rceil+1 ,p), f(S^{\ast} ,p)\}$, we can get $f(t,p)<1$ if one of these two inequalities $f(\lceil \frac{m-2.5}{2} \rceil+1 ,p)<1$ or $f(S^{\ast} ,p)\}<1$ is satisfied.

Furthermore, the inequality $f(x,p)<1$ when $x$ fixed is very easy to solve, the range of such $p$ is
$$p<h(x)=\displaystyle\frac{ln (x+1)-ln x}{ln \left[(\displaystyle\frac{\sqrt2+1}{2})[\displaystyle\frac{(\lambda-1)(m-3)}{2}+\lambda+\displaystyle\sqrt\frac{1}{2}]\right]}$$

Hence, for any $0<p<p^{\ast}=max\{h(S^{\ast}),h(\lceil \frac{m-2.5}{2} \rceil+1)\}$, we have that $h^{\ast}(p,A,t) \leq f(t,p)<1$. Therefore, by Theorem 1, we know both $l_0$-minimization (1) and $l_p$-minimization (4) have the same unique solution.
\end{proof}
Combining Theorems 3 and 4, we have reached the major goals of this paper. The most important result in these two theorems is the analysis expression of $p^{\ast}(A,b)$, with which, the specific range of $p$ can be calculated easily.
\begin{figure}[h]
\centering
\includegraphics[width=5.00in,height=3.00in]{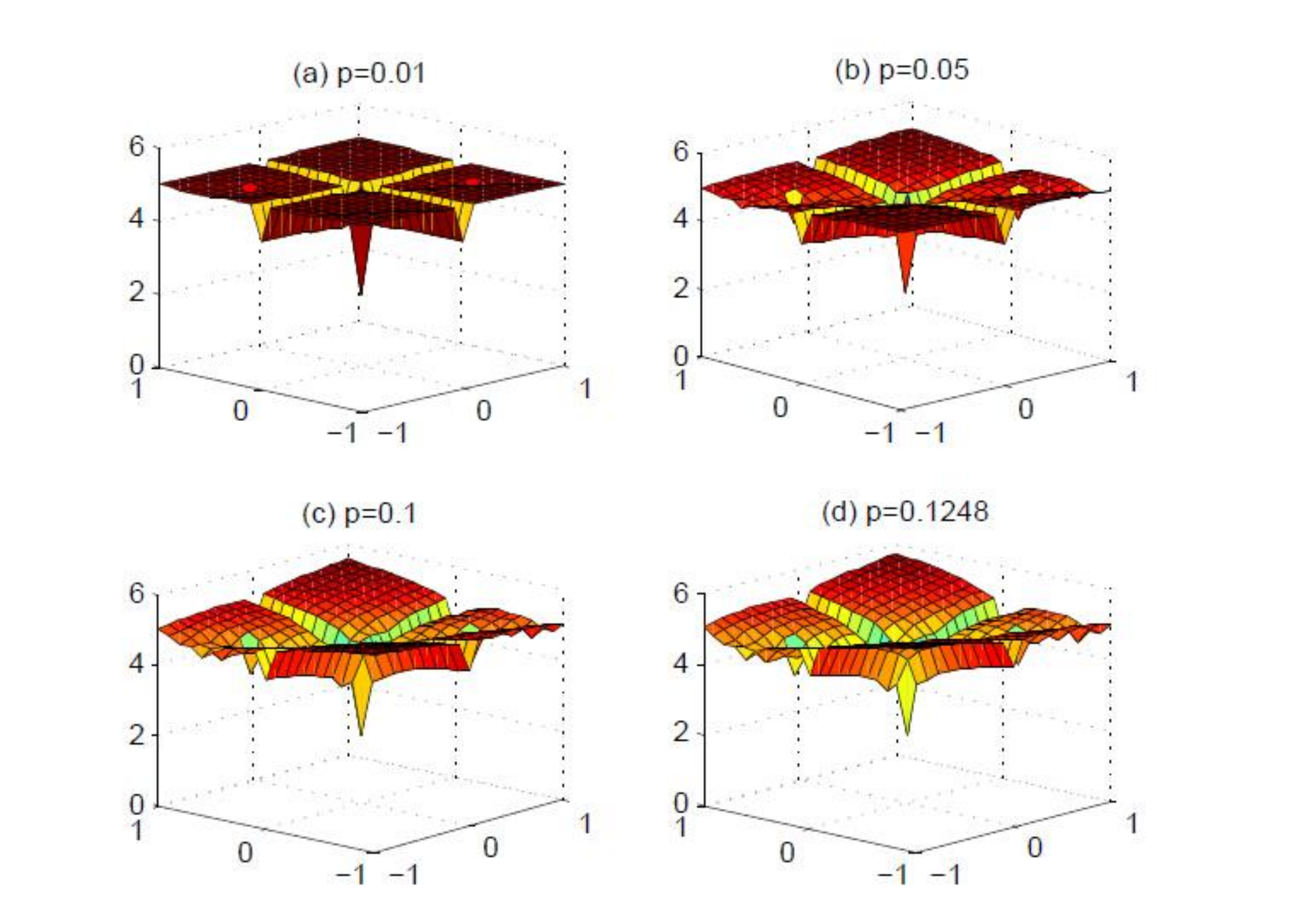}
\caption{the p-norm of the solutions to the different cases in Example 2}
\end{figure}
We present two examples to demonstrate the validation of Theorem 4, 
\begin{example}
We consider an underdetermined system $Ax=b$, where
\begin{center}
$A=\left(
\begin{array}{cccc}
1 & 1.5 & 0.7 & 0 \\
0 & 2 & 0.5 & 1 \\
1 & 0.5 & 1 & 1 \\
\end{array}
\right)$ and $b=\left(
\begin{array}{ccc}
1.65\\
1.4\\
0.95
\end{array}
\right)$
\end{center}
\end{example}

 It is obvious that the sparest solution is $x^{\ast}=[0.6\ 0.7\ 0\ 0]^T$, and the solution to the equation can be expressed as the following form:
 $$x=[0.6\ 0.7\ 0\ 0\ ]^T+t[\frac{18}{11}\ \frac{2}{11}\ -\frac{30}{11}\ 1]^T$$
 Therefore, the $l_p$-norm of $x$ can be expressed $$\|x\|_p^p=|0.6+\frac{18}{11}t|^p+|0.7+\frac{2}{11}t|^p+|\frac{30}{11}t|^p+|t|^p$$

Furthermore, $\lambda_{max}(A^TA)=3.1286$,$\lambda_{min}(A^TA)=0.9342$ and $\lambda=\frac{\lambda_{max}(A^TA)}{\lambda_{min}(A^TA}=11.2155$. It is easy to get that $A^T(AA^T)^{-1}b=[0.4372\
0.6819\ 0.2773\ -0.0995]^T$, hence $h(S^{\ast})=0.0738$ and $h(\lceil \frac{m-2.5}{2} \rceil+1)=0.2293$, so $p^{\ast}=0.2293$.

As shown in the Fig1, we can get the solution to $l_p$-minimization in different cases when $p=0.1,0.15,0.2,\ and\ 0.2290$ for $t=0$, which are just the sparest solution $x^{\ast}=[0.6\ 0.7\ 0\ 0]^T$

\begin{figure}[h]
\centering
\includegraphics[width=5.00in,height=3.00in]{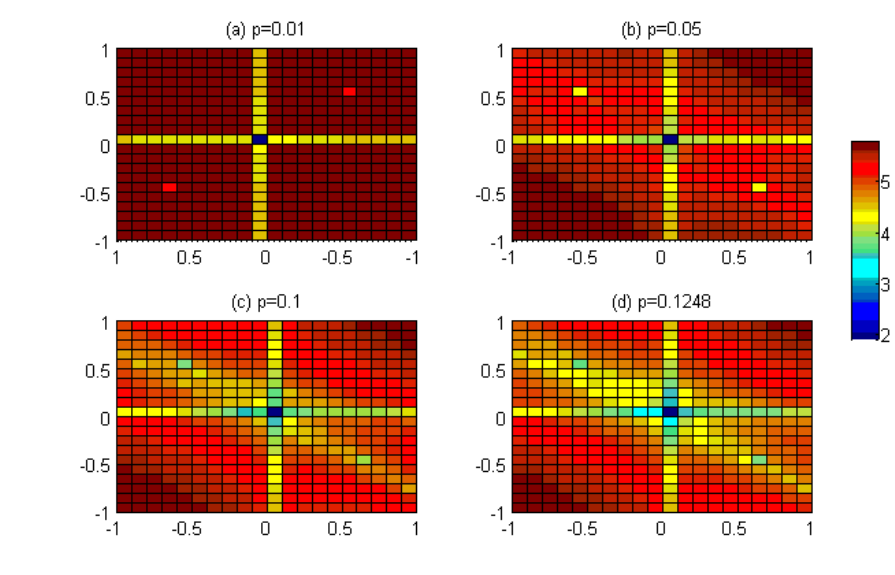}
\caption{coordinates on s-t plane for the p-norm to the solution to the different cases in Example 2}
\end{figure}
\begin{example}
We consider an underdetermined system $Ax=b$, where
\begin{center}
$A=\left(
\begin{array}{ccccc}
1 & 0 & 3.5 & 3 & 2.7 \\
0 & 2 & 0 & 1.5 & 4.5 \\
2 & 2 & 4 & 0.5 & 1.5
\end{array}
\right)$ and $b=\left(
\begin{array}{ccc}
1\\
1\\
3
\end{array}
\right)$
\end{center}
\end{example}

It is easy to get the sparest solution $x^{\ast}=[1\ 0.5\ 0\ 0\ 0]^T$, and the solutions of the underdetermined system $Ax=b$ can be expressed as the following parameterized form
$$x=[1\ 0.5\ 0\ 0\ 0]^T+s[\frac{31}{6}\ -\frac{3}{4}\ -\frac{7}{3}\ 1\ 0]^T+t[7.1\ -2.25\ -2.8\ 0\ 1]^T \ where \ s,t\in R$$
Therefore,
$$\|x\|_p^p=|1+\frac{31}{6}s+7.1t|^p+|0.5-\frac{3}{4}s-2.25t|+|-\frac{7}{3}s-2.8t|^p+|s|^p+|t|^p$$
Furthermore, $\lambda_{max}(A^TA)=7.8244$, $\lambda_{min}(A^TA)=2.3929$ and $\lambda=\frac{\lambda_{max}(A^TA)}{\lambda_{min}(A^TA}=10.7818$. It is easy to get that $A^T(AA^T)^{-1}b=[0.2851\
0.4924\ 0.3600\ -0.2639\ 0.0941]^T$, hence $h(S^{\ast})=0.0562$ and $h(\lceil \frac{m-2.5}{2} \rceil+1)=0.1249$, so $p^{\ast}=0.1249$.

From Fig2, we can also find that the solution in different case when $p=0.01,0.05,0.1,\ and\ 0.1248$ for $s=t=0$. The result can be seen more clearly in Fig3.

\section{CONCLUSION}
 In this paper we have studied the equivalence between $l_0$-minimization (1) and $l_p$-minimization (4). By using the null space property, a sufficient and necessary condition to recover a sparse vector via these two models, we present an analysis expression of $p^{\ast}(A,b)$ in Theorem 4 to make sure both $l_0$-minimization (1) and $l_p$-minimization (4) for $0<p\leq p^{\ast}(A,b)$ have the same unique solution.

However, in this paper, we only consider the situation when $l_0$-minimization (1) has the unique solution. and it should be pointed out that the condition of uniqueness is a very important one for our main contribution, especially for Lemma 1. Under the condition of uniqueness, in Lemma 1, we can get a result which is similar to RIP. Unlike the nonhomogeneity of RIP, the result is not conflict with the equivalence of all the linear systems $aAx=ax,\ a\in R$. Therefore, the authors guess that we can replace RIP with the condition of uniqueness in more application.

Meanwhile, it also need to be pointed out that the main result in Theorem 4, the analysis expression of $p^{\ast}(A,b)$ have a closer relationship with the matrix $A$ than the vector $b$. Let $p_{new}(A)=h(\lceil \frac{m-2.5}{2} \rceil)$, we can find that, although $p_{new}(A)$ is smaller than $p^{\ast}(A,b)$, $p_{new}(A)$ have the same effect as $p^{\ast}(A,b)$ does. The phenomenon may also be explained by the condition of uniqueness and the null space property, a property only depend on the structure of the matrix $A$ itself. Therefore, what is such $p^{\ast}(A,b)$ without the condition of uniqueness? The authors think the answer to this problem will be an important improvement for the application of $l_p$-minimization. In conclusion, the authors hope that in publishing this paper, a brick will be thrown out and be replaced with a gem.

 \section{Acknowledgements}
 This work was supported by the NSFC under contact no.11131006 and by the EU FP7 Project EYE2E (269118) and LIVCODE (295151), and in part by the National Basic Research Program of China under contact no. 2013CB329404

\end{document}